\newtheorem{lemma}{\emph{\underline{Lemma}}}
\newtheorem{corollary}{\emph{\underline{Corollary}}}
\newtheorem{proposition}{\emph{\underline{Proposition}}}
\def\phi{\varphi}
\def\({\left(}
\def\){\right)}
\def\b0{{\mathbf{0}}}
\newcommand{\diag}{\mathrm{diag}}
\def\BibTeX{{\mathrm B\kern-.05em{\sc i\kern-.025em b}\kern-.08em
    T\kern-.1667em\lower.7ex\hbox{E}\kern-.125emX}}
\begin{document}
\allowdisplaybreaks[4]
\renewcommand{\baselinestretch}{0.9}
\title{
\begin{flushleft}
\author{Zhenyu~Kang,~\IEEEmembership{Student~Member,~IEEE}, Changsheng~You,~\IEEEmembership{Member,~IEEE},\\
	 and Rui Zhang,~\IEEEmembership{Fellow,~IEEE} \thanks{This work is supported by Ministry of Education, Singapore under Award T2EP50120-0024 and by Advanced Research and Technology Innovation Centre (ARTIC) of National University of Singapore under Research Grant R-261-518-005-720. Z. Kang and R. Zhang are  with the Department of Electrical and Computer Engineering, National University of Singapore, Singapore (Email: zhenyu\_kang@u.nus.edu, elezhang@nus.edu.sg). C. You is with the Department of Electrical and Electronic Engineering, Southern University of Science and Technology (SUSTech), Shenzhen 518055 , China. He was with the Department of Electrical and Computer Engineering, National University of Singapore, Singapore 117583 (e-mail: eleyouc@nus.edu.sg).
}\vspace{-20pt}}
\end{flushleft}
\huge 
IRS-Aided Wireless Relaying: \\{\color{black}Deployment Strategy and Capacity Scaling}} 
\maketitle

\begin{abstract}
In this letter, we consider an intelligent reflecting surface (IRS)-aided wireless relaying system, where a decode-and-forward relay (R) is employed to forward data from a source (S) to a destination (D), aided by $M$ passive reflecting elements.
We consider two practical IRS deployment strategies, namely, \textit{single-IRS deployment} where all reflecting elements are mounted on one single IRS that is deployed near S, R, or D, and \textit{multi-IRS deployment} where the reflecting elements are allocated over three separate IRSs which are deployed near S, R, and D, respectively. 
Under the line-of-sight (LoS) channel model, we characterize the capacity scaling orders with respect to an increasing $M$ for the IRS-aided relay system with different IRS deployment strategies.
For single-IRS deployment, we show that deploying the IRS near R achieves the highest capacity as compared to that near S or D.
{\color{black}While for multi-IRS deployment, we propose a practical cooperative IRS passive beamforming design which is analytically shown to achieve a larger capacity scaling order than the single-IRS deployment (i.e., near R or S/D) when $M$ is sufficiently large.}
Numerical examples are provided, which validate our theoretical results.
\end{abstract}
\begin{IEEEkeywords}
Intelligent reflecting surface (IRS), wireless relay, IRS deployment, passive beamforming, capacity scaling order.
\end{IEEEkeywords}
\vspace{-8pt}
\section{Introduction}
\begin{spacing}{0.84}
Intelligent reflecting surface (IRS) has emerged as a promising technology to enhance the spectral/energy efficiency of future wireless communication systems  \cite{9326394,8796365}. Specifically, with a large array of passive reflecting elements, IRS is able to control the radio propagation environment by dynamically tuning the amplitudes and/or phases of incident signals.
Thus, IRS has recently received significant research interests and been investigated in various wireless systems/applications\cite{9326394,8796365}. 

In the existing literature, IRS is mainly employed as a passive relay near the transmitter/receiver to assist their communication \cite{8811733,9110912}.
Different from the conventional half-duplex amplify-and-forward (AF) and decode-and-forward (DF) active relays, IRS operates in full-duplex with passive signal reflection only, thus incurring no amplification/processing noise, and achieving high spectral efficiency at low hardware/energy cost \cite{8811733,9110912,8741198}. 
Recently, to further improve the wireless relaying performance, new hybrid active/passive relay systems have been proposed in \cite{obeed2021joint,9108262,ying2020relay}, where IRS is employed to enhance the channel gain of the transmitter-(active) relay and/or relay-receiver links.
However, these works assumed a single IRS (or multiple IRSs in close proximity) deployed at fixed location, but did not study the effect of IRS deployment (i.e., IRS elements partition/placement) on the system performance.
For example, it remains unknown whether the conventional strategy by deploying IRS near the transmitter/receiver is still optimal for the IRS-aided relay system, and whether it is beneficial to divide one single IRS into multiple smaller-size IRSs and place them near the transmitter, receiver, and/or relay. 

To address the above questions, we study in this letter an IRS-aided wireless relaying system, where a DF relay (R) is employed to forward data from a source (S) to a destination (D), aided by a total number of $M$ reflecting elements.
{\color{black}We consider two IRS deployment strategies of practical interests, namely, \textit{single-IRS deployment} with all the reflecting elements mounted on one single IRS near S, R or D, and \textit{multi-IRS deployment} with the reflecting elements properly allocated over three separate IRSs which are placed near S, R, and D, respectively.}
To characterize the fundamental capacity scaling orders with respect to (w.r.t.) an asymptotically large $M$ for the considered system under different IRS deployment strategies, we assume the line-of-sight (LoS) channel model for all the available links. 
{\color{black}For single-IRS deployment, we show that deploying the IRS near R achieves the highest capacity as compared to that near S or D.
While for multi-IRS deployment, we propose a practical cooperative IRS passive beamforming design and show that it can achieve an even higher capacity than the single-IRS deployment (i.e., near R or S/D) when $M$ is sufficiently large, thanks to the higher asymptotic passive beamforming gain of the double-reflection link between IRSs, which is unavailable under the single-IRS deployment. The numerical examples show the superior performance with the help of IRS and validate our theoretical results.}

\emph{Notations}: 
Superscripts $(\cdot)^T$ and $(\cdot)^H$ stand for the transpose and Hermitian transpose operations, respectively. $\mathbb{C}^{a \times b}$ denotes the space of $a \times b$ complex-valued matrices. The operation $\arg(\cdot)$ returns the angle of the complex value, $\diag{(\boldsymbol{x})}$ returns a diagonal matrix with the elements in $\boldsymbol{x}$ on its main diagonal. $\jmath$ represents the imaginary unit, $\otimes$ denotes the Kronecker product, $[\cdot]_{m,n}$ denotes the $(m,n)$-th entry of the matrix, and $[\cdot]_{m}$ denotes the $m$-th entry of the vector.

\vspace{-10pt}
\section{System Model}
{\color{black}We consider an IRS-aided wireless relaying system as illustrated in Fig.~\ref{sysmod}, where S (e.g., access point) transmits data to D (e.g., a user within a given area) that is placed $2L$ meters (m) away from S\footnote{{\color{black}Note that the considered system setup with fixed node locations can be extended to the case where the user moves within the target area, and is also applicable when the user moves out of this target area and enters into another area which is covered by a different IRS.}}.} {\color{black}We assume that the direct link between S and D is negligible due to large $L$ and/or dense obstacles in the environment.
A DF relay is thus deployed between them to assist their communication.} Specifically, R receives message from S in one time slot, then decodes and forwards it to D in the next time slot, where we assume that all time slots are of equal duration, and S and R have the same transmit power of $P$ for the ease of exposition. 
All S, R, and D are equipped with a single antenna.
Without loss of generality, we consider a three-dimensional (3D) Cartesian coordinate system, where the locations of S and D are denoted by $\boldsymbol{u}_{\rm S}=(0,0,0)$ and $\boldsymbol{u}_{\rm D}=(2L,0,0)$, respectively. 
To balance the path-loss of the S-R and R-D links, R is deployed in the middle of S and D at the location of $\boldsymbol{u}_{\rm R}=(L,0,0)$\footnote{{\color{black}For the general case with different transmit powers of S and R, R should be properly placed in between S and D to balance the achievable rates of the S-R and R-D links. Nevertheless, it can be shown that our obtained results on the capacity scaling order still apply to this general case.}}. 
{\color{black}Moreover, to improve the S-R and/or R-D links, one or more IRSs with a total number of $M$ reflecting elements can be deployed.} 
\begin{figure}[t]
\centerline{\includegraphics[width=3.3in]{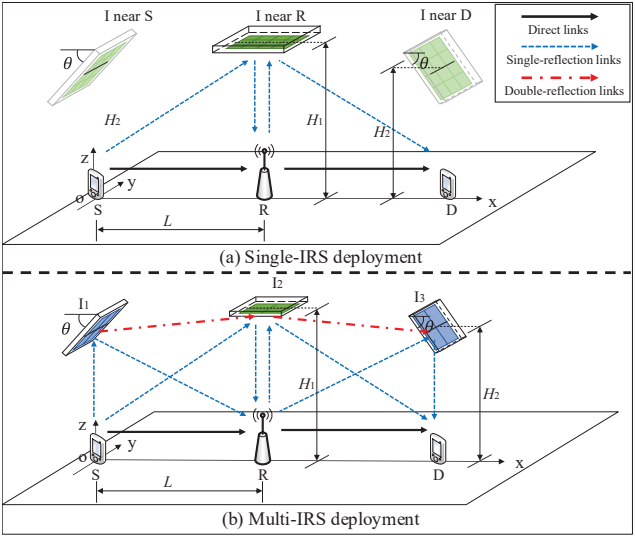}}
\vspace{-7pt}
\caption{IRS-aided wireless relaying systems.}\label{sysmod}
\vspace{-10pt}
\end{figure}

We consider two strategies for deploying the $M$ reflecting elements, namely, 
\textit{single-IRS deployment} with all the reflecting elements mounted on one single IRS that is placed near S, R, or D (see Fig. \ref{sysmod}(a)), 
and \textit{multi-IRS deployment} that distributes the $M$ reflecting elements over three separate IRSs placed near S, R, and D, respectively (see Fig.~\ref{sysmod}(b)). 
Note that to ensure there exists a single-reflection link in the case of single-IRS deployment as well as single-/double-reflection links in the case of multi-IRS deployment, we consider the following practical IRS placement near S/R/D: the IRS near R is placed parallel to the (x, y)-plane with its central point at $\boldsymbol{u}_{\rm I_{\rm R}}=(L,0,H_1)$, while the IRS near S (D) is placed towards S/R/D with a downtilt angle $\theta$ and the central point at $\boldsymbol{u}_{\rm I_{\rm S}}=(0,0,H_2)$ ($\boldsymbol{u}_{\rm I_{\rm D}}=(2L,0,H_2)$); and $H_2<H_1$ (see Fig.~\ref{sysmod}).
\vspace{-7pt}

\section{Single-IRS Deployment}
In this section, we consider the single-IRS deployment. We first present its channel model and then characterize its capacity scaling order w.r.t. an asymptotically large $M$. 

\vspace{-7pt}
\subsection{Single-IRS Deployment near R}
We first consider the case with a single IRS deployed near R.
For simplicity, we assume the LoS channel model for all the available links\footnote{\color{black}Under the LoS channel model, the CSI can be easily obtained based on the locations and orientations of S/R/D/I. Moreover, for the more general Rician fading channel model, the cascaded CSI can be practically obtained by using existing IRS channel estimation techniques under the single-IRS deployment (e.g., \cite{9133142}) or double-IRS deployment (e.g., \cite{9373363}).}.
Let $\boldsymbol{w}(\varsigma, N) \triangleq\left[1, e^{-\jmath \pi \varsigma}, \cdots, e^{-\jmath \pi(N-1) \varsigma}\right]^{T}$ denote the one-dimensional (1D) steering vector function, where $\varsigma$ is the phase difference between two adjacent reflecting elements, and $N$ denotes the size of a uniform linear array. Then the channel from node $i$ to $j$, with $i\in\{\mathrm{\text{S,I,R}}\},j\in\{\mathrm{\text{I,R,D}}\},i\neq j$, and $\{i,j\}\neq\{\text{S,D}\}$, can be modeled
in the following general form{\color{black}\footnote{\color{black}{It is worth noting that the transceivers in our considered system (with a carrier frequency of 6 GHz) are practically located in the far-field of each IRS element due to its small size ($d_{\rm{I}}=4/\lambda$ with $\lambda=0.05$ m) as compared to the distance with the nearby transceivers (e.g., $H_1=5$ m, and $H_2=4$ m); as a result, our assumed ``far-field" channel model has been verified to be very close to the actual ``far-field" model based on each individual element of the IRS.}}}
\vspace{-3pt}
\begin{align}\label{g_ij}
    \boldsymbol{G}_{{i, j}}={g}_{{i, j}}\boldsymbol{a}_{\rm r}\left(\theta^{\rm r}_{i,j}, \vartheta^{\rm r}_{i,j},m_{j}\right)\boldsymbol{a}_{\rm t}^H\left(\theta^{\rm t}_{{i, j}}, \vartheta^{\rm t}_{{i,j}},m_{i}\right),
\end{align}
where ${g}_{{i, j}}={\beta_0^{\frac{1}{2}}e^{-\jmath\frac{2\pi}{\lambda}D_{i,j}}}/{D_{i,j}^{\frac{\alpha}{2}}}$ is the complex channel gain of the $i\to j$ link with $\beta_0$ denoting its channel power gain at the reference distance of $1$ m, $\lambda$ denoting the signal wavelength, $D_{i,j}$ denoting the distance from node $i$ to $j$, and $\alpha$ denoting the path-loss exponent.
Moreover, $\theta^{\rm r}_{i,j} (\text{or } \vartheta^{\rm r}_{i,j})$ $\in$ $[0,\pi]$ and $\theta^{\rm t}_{i,j} (\text{or }\vartheta^{\rm t}_{i,j})$ $\in$ $[0,\pi]$ denote respectively the azimuth (or elevation) angle-of-arrival (AoA) at node $j$ and angle-of-departure (AoD) at node $i$ w.r.t. the IRS plane, 
$m_i$ ($m_j$) denotes the number of elements/antennas at node $i$ ($j$), 
$\boldsymbol{a}_{\rm r}\left(\theta^{\rm r}_{i,j}, \vartheta^{\rm r}_{i,j},m_{j}\right)\triangleq \boldsymbol{w}(\frac{2d_{\mathrm I}}{\lambda}\cos{\theta^{\rm r}_{{i, j}}}\sin{\vartheta^{\rm r}_{{i, j}}},M^{\rm r}_{j,\mathrm{h}})\otimes\boldsymbol{w}(\frac{2d_{\mathrm I}}{\lambda}\cos{\vartheta^{\rm r}_{{i, j}}},{M^{\rm r}_{j,\mathrm{v}}})$ denotes the receive array response from node $i$ to $j$ where $M^{\rm r}_{j,\mathrm{h}}$ and $M^{\rm r}_{j,\mathrm{v}}$ denote the number of horizontal and vertical elements of node $j$ with $M^{\rm r}_{j,\mathrm{h}}\times M^{\rm r}_{j,\mathrm{v}}=m_{j}$, 
and $\boldsymbol{a}_{\rm t}\left(\theta^{\rm t}_{{i, j}}, \vartheta^{\rm t}_{{i,j}},m_{i}\right)$ denotes the transmit array response from node $i$ to $j$ which can be defined similar to $\boldsymbol{a}_{\rm r}\left(\theta^{\rm r}_{i,j}, \vartheta^{\rm r}_{i,j},m_{j}\right)$.
Note that when $\{i,j\}$ $\subset$ $\{\text{S,R,D}\}/\{\text{S,D}\}$, we have $\boldsymbol{a}_{\rm t}$ $(\boldsymbol{a}_{\rm r})$ $=$ $1$.
Let $\boldsymbol{\Phi}_{\mathrm{I},i,j}$ $\in$ $\boldsymbol{\mathcal{P}}_M$ $\triangleq$ $\{\boldsymbol{\Phi}\big|   |[\boldsymbol{\Phi}]_{m,m}|$ $=$ $1,[\boldsymbol{\Phi}]_{m,n}$ $=$ $0,m,n$ $\in$ $\{1,\cdots,M\},m$ $\neq$ $n\}$ denote the reflection matrix of I when node $i$ transmits data to $j$.
As such, the effective channel from S to R with I deployed near R, denoted by $h^{\rm (R)}_{\mathrm{S,R}}$,\footnote{The superscripts (R), (S), (D) and (M) represent the single-IRS deployment near R, S and D, as well as the multi-IRS deployment, respectively.} is given by
\vspace{-3pt}
\begin{align}\label{hsr_R}
    h^{\rm (R)}_{\mathrm{S,R}}=g_{\mathrm{S,R}}+\boldsymbol{g}^H_{\mathrm{I, R}}\boldsymbol{\Phi}_{\mathrm{I},\mathrm{S},\mathrm{R}}\boldsymbol{g}_{\mathrm{S, I}},
\end{align}
where $g_{\mathrm{S,R}}\in\mathbb{C}$, $\boldsymbol{g}^H_{\mathrm{I, R}}\in\mathbb{C}^{1\times M}$ and $\boldsymbol{g}_{\mathrm{S, I}}\in\mathbb{C}^{M\times 1}$ denote the channels from S to R, I to R, and S to I, respectively, which are modeled based on \eqref{g_ij}.
The corresponding maximum achievable rate from S to R in bits per second per Hertz (bps/Hz), denoted by $r^{\rm (R)}_{\mathrm{S,R}}$, is thus given by
\vspace{-3pt}
\begin{equation}\label{Rsr_R}
    r^{\rm (R)}_{\mathrm{S,R}} = \log_2\left(1+\frac{P|h^{\rm (R)}_{\mathrm{S,R}}|^2}{\sigma^2}\right),
\end{equation}
where $\sigma^2$ denotes the noise power at R. Similarly, the effective channel from R to D, and the corresponding maximum achievable rate (denoted by $r^{\rm (R)}_{\mathrm{R,D}}$) can also be obtained. Based on the above, the system capacity with a single IRS deployed near R, denoted by $C^{\rm (R)}$, is given by
\vspace{-3pt}
\begin{align}\label{C_R}
    C^{\rm (R)}&\!=\!\frac{1}{2}\min{\!\left(\!\max_{\boldsymbol{\Phi}_{\mathrm{I},\mathrm{S},\mathrm{R}}} r^{\rm (R)}_{\mathrm{S,R}},\max_{\boldsymbol{\Phi}_{\mathrm{I},\mathrm{R},\mathrm{D}}}r^{\rm (R)}_{\mathrm{R,D}}\!\right)}\!\overset{(a)}{=}\!\max_{\boldsymbol{\Phi}_{\mathrm{I},\mathrm{S},\mathrm{R}}} \frac{1}{2}r^{\rm (R)}_{\mathrm{S,R}},
\end{align}
where $\boldsymbol{\Phi}_{\mathrm{I},\mathrm{S},\mathrm{R}}$, $\boldsymbol{\Phi}_{\mathrm{I},\mathrm{R},\mathrm{D}}$ $\in$ $\boldsymbol{\mathcal{P}}_M$, $(a)$ is because the links from S to R and R to D are symmetric and thus $r^{\rm (R)}_{\mathrm{S,R}}$ $=$ $r^{\rm (R)}_{\mathrm{R,D}}$ by properly designing the IRS passive beamforming (shown next),
and the factor $\frac{1}{2}$ is due to the fact that data is transmitted from S to D over two equal time slots. 
To maximize $r^{\rm (R)}_{\mathrm{S,R}}$ in \eqref{Rsr_R}, it can be shown that the main diagonal of the optimal IRS passive beamforming is \cite{8811733}
\vspace{-3pt}
\begin{equation}\label{phisr_R}
\begin{aligned}
    \!\!\!&[\boldsymbol{\Phi}_{\mathrm{I},\mathrm{S},\mathrm{R}}]_{m,m}\!\!=\!\!e^{\jmath\left({\arg\left({g_{\mathrm{S,R}}}\right)}-\arg\left({[\boldsymbol{g}^H_{\mathrm{I,R}}]_m}\right)-\arg\left({[\boldsymbol{g}_{\mathrm{S, I}}]_m}\right)\right)},
\end{aligned}
\end{equation} 
where $ m\!=\!1,\!\cdots\!,M$.
In this case, the channel power gain of the S-I-R reflection link is maximized, and the S-I-R and direct S-R channels are phase-aligned.
By substituting \eqref{hsr_R} and \eqref{phisr_R} into \eqref{Rsr_R} and \eqref{C_R}, $C^{\rm (R)}$ is obtained as

\begin{align}\label{C_R2}
    C^{\rm (R)}=\frac{1}{2}\log_2\left(1+\frac{P}{\sigma^2}\left(\frac{\beta_0^{\frac{1}{2}}}{L^{\frac{\alpha}{2}}}+\frac{M\beta_0}{H_1^{\frac{\alpha}{2}}(H_1^2+L^2)^{\frac{\alpha}{4}}}\right)^2\right).
\end{align}

Based on \eqref{C_R2}, the capacity scaling order w.r.t. an asymptotically large $M$ is characterized in the following proposition.
\vspace{-5pt}
\begin{proposition}\label{pro1}
\allowdisplaybreaks[4]
\!\emph{For single-IRS deployment near R, the system capacity increases with $M$ as $M\!\to\!\infty$ according to}
\begin{align}
    \lim_{M\to\infty}\frac{C^{\rm (R)}}{\log_2M}=1.
\end{align}
\end{proposition}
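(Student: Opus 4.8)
The plan is to work directly from the closed-form capacity expression in \eqref{C_R2}, since the proposition is an immediate asymptotic consequence of it; no further channel modelling is needed. First I would introduce the two $M$-independent positive constants $c_1 \triangleq \beta_0^{\frac{1}{2}}/L^{\frac{\alpha}{2}}$ and $c_2 \triangleq \beta_0/\big(H_1^{\frac{\alpha}{2}}(H_1^2+L^2)^{\frac{\alpha}{4}}\big)$, so that \eqref{C_R2} becomes $C^{\rm (R)} = \frac{1}{2}\log_2\!\big(1 + \frac{P}{\sigma^2}(c_1 + c_2 M)^2\big)$. The key observation is that the single-reflection term $c_2 M$ grows linearly in $M$ while $c_1$ stays fixed, so the squared effective channel amplitude $(c_1 + c_2 M)^2$ is dominated by its $c_2^2 M^2$ term as $M\to\infty$.

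Next I would extract the leading-order behaviour. Writing $1 + \frac{P}{\sigma^2}(c_1+c_2M)^2 = \frac{P c_2^2}{\sigma^2}\,M^2\big(1 + o(1)\big)$ and taking the base-$2$ logarithm yields $C^{\rm (R)} = \log_2 M + \frac{1}{2}\log_2\!\big(\frac{P c_2^2}{\sigma^2}\big) + o(1)$, in which every term other than $\log_2 M$ is either a fixed constant or vanishing. Dividing through by $\log_2 M$ and letting $M\to\infty$ then delivers the claimed limit of $1$, since the additive constant and the $o(1)$ remainder both contribute $o(\log_2 M)$.

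To make this fully rigorous without relying on the $o(1)$ shorthand, I would instead sandwich the argument of the logarithm. For $M\ge 1$ the inequalities $c_2^2 M^2 \le (c_1 + c_2 M)^2 \le (c_1 + c_2)^2 M^2$ hold (the upper bound because $c_1 + c_2 M \le (c_1+c_2)M$), so that $\frac{1}{2}\log_2\!\big(1 + \frac{P}{\sigma^2}c_2^2 M^2\big) \le C^{\rm (R)} \le \frac{1}{2}\log_2\!\big(1 + \frac{P}{\sigma^2}(c_1+c_2)^2 M^2\big)$. Dividing each bound by $\log_2 M$ and letting $M\to\infty$, both sides tend to $1$, and the squeeze theorem gives the result.

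I do not anticipate any genuine obstacle in this argument: once the exact formula \eqref{C_R2} is in hand, the proposition reduces to a routine limit computation, and the only care required is the bookkeeping that correctly absorbs the additive constant $\frac{1}{2}\log_2\!\big(\frac{P c_2^2}{\sigma^2}\big)$ and the $+1$ inside the logarithm into lower-order terms. The substantive effort lies entirely upstream—in establishing \eqref{C_R2} via the optimal phase-aligned passive beamforming \eqref{phisr_R} and the S-R/R-D symmetry step $(a)$ in \eqref{C_R}—all of which precedes the statement and may be invoked here.
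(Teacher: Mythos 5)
Your proposal is correct and takes essentially the same approach as the paper's proof: both work directly from the closed-form capacity \eqref{C_R2}, note that the argument of the logarithm grows like $M^2$ so that $C^{\rm (R)} = \log_2 M + O(1)$, and then divide by $\log_2 M$. Your squeeze-theorem sandwich simply makes rigorous the asymptotic simplification that the paper performs informally when it drops the additive $1$ and the constant term $\beta_0^{1/2}/L^{\alpha/2}$ inside the logarithm.
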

\begin{proof}
With $M\to\infty$ and the capacity $C^{\rm (R)}$ given in \eqref{C_R2}, the capacity scaling order is obtained as
\begin{equation}
\begin{aligned}
    &\lim_{M\to\infty}\!\frac{C^{\rm (R)}}{\log_2M}\\
    &\!=\!\!\lim_{M\to\infty}\!\frac{1}{2}\!\log_2\!\!\left(\!1\!+\!\frac{P}{\sigma^2}\!\!\left(\!\frac{\beta_0^{\frac{1}{2}}}{L^{\frac{\alpha}{2}}}\!+\!\frac{M\beta_0}{H_1^{\frac{\alpha}{2}}(H_1^2\!+\!L^2)^{\frac{\alpha}{4}}}\!\right)^{\!\!2}\right)/\log_2M\\
    &\!=\!\lim_{M\to\infty}\frac{\log_2M}{\log_2M}\!+\!\frac{\log_2\left(\sqrt{P}\beta_0\right)}{\log_2M}\!-\!\frac{\log_2\left(\sigma{H_1^{\frac{\alpha}{2}}(H_1^2\!+\!L^2)^{\frac{\alpha}{4}}}\right)}{\log_2M}\\
    &=1,
\end{aligned}
\end{equation}
which thus completes the proof.
\end{proof}

\vspace{-5pt}
\subsection{Single-IRS Deployment near S or D}
Next, we consider the case of single-IRS deployment near S, while by using the system symmetry, the obtained results can be readily applied to the case where I is deployed near D.
For data transmission from S to R, by assuming the LoS channel model given in \eqref{g_ij}, the effective channel can be modeled in a similar form as \eqref{hsr_R} and its achievable rate, denoted by $r^{\rm (S)}_{\mathrm{S,R}}$, can be obtained in a form similar as \eqref{Rsr_R}. 
As such, the corresponding effective channel from R to D is given by $h^{\rm(S)}_{\mathrm{R,D}}=g_{\mathrm{R,D}}$,
leading to the following achievable rate
\begin{equation}\label{Rrd_S}
    r^{\rm (S)}_{\mathrm{R,D}}=\log_2\left(1+\frac{P|g_{\mathrm{R,D}}|^2}{\sigma^2}\right)=\log_2\left(1+\frac{P\beta_0}{\sigma^2L^{\alpha}}\right).
\end{equation}
Due to the lack of the R-I-D reflection link, it can be easily shown that $r^{\rm (S)}_{\mathrm{S,R}}>r^{\rm (S)}_{\mathrm{R,D}}$ by properly designing the passive beamforming and thus the corresponding capacity, denoted by $C^{\mathrm{(S)}}$, is given by
\begin{align}\label{C_S}
    \!\!\!\!\!C^{\mathrm{(S)}}\!=\!\frac{1}{2}\!\min{\!\left(\!\max_{\boldsymbol{\Phi}_{\mathrm{I},\!\mathrm{S},\mathrm{R}}} \!r^{\rm (S)}_{\mathrm{S,R}},\max_{\boldsymbol{\Phi}_{\mathrm{I},\mathrm{R},\mathrm{D}}}\!r^{\rm (S)}_{\mathrm{R,D}}\!\right)}\!=\!\frac{1}{2}r^{\rm (S)}_{\mathrm{R,D}},
\end{align}
where $\boldsymbol{\Phi}_{\mathrm{I},\mathrm{S},\mathrm{R}}$, $\boldsymbol{\Phi}_{\mathrm{I},\mathrm{R},\mathrm{D}}\in\boldsymbol{\mathcal{P}}_M$.
Based on the above, we can easily obtain the capacity scaling order for the case with a single IRS deployed near S.
\begin{proposition}\label{pro2}
\emph{For single-IRS deployment near S, its capacity does not increase with $M$, i.e., as $M\to\infty$,}
\begin{equation}
    \lim_{M\to\infty}\frac{C^{\mathrm{(S)}}}{\log_2M}=0.
\end{equation}
\end{proposition}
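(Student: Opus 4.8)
The plan is to exploit the fact that, under single-IRS deployment near S, the capacity in \eqref{C_S} is bottlenecked by the R-D link, which enjoys no IRS-assisted reflection. First I would recall that the deployment places the single IRS near S, so there is no R-I-D reflection path; consequently the R-D effective channel reduces to the bare LoS term $h^{\rm(S)}_{\mathrm{R,D}}=g_{\mathrm{R,D}}$, and the resulting rate $r^{\rm (S)}_{\mathrm{R,D}}$ in \eqref{Rrd_S} is a fixed constant that is entirely independent of $M$.

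The next step is to identify which argument of the $\min$ operation in \eqref{C_S} is active. Because the S-R link benefits from the IRS passive beamforming gain (growing with $M$) while the R-D link does not, one has $r^{\rm (S)}_{\mathrm{S,R}}>r^{\rm (S)}_{\mathrm{R,D}}$ for all sufficiently large $M$ under properly designed beamforming, as already noted preceding \eqref{C_S}. Hence the $\min$ selects the R-D rate, giving
\begin{equation}
C^{\mathrm{(S)}}=\frac{1}{2}r^{\rm (S)}_{\mathrm{R,D}}=\frac{1}{2}\log_2\left(1+\frac{P\beta_0}{\sigma^2L^{\alpha}}\right)\triangleq c_0,
\end{equation}
a quantity that does not depend on $M$.

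With $C^{\mathrm{(S)}}=c_0$ a positive constant, the conclusion follows immediately from the divergence of the denominator:
\begin{equation}
\lim_{M\to\infty}\frac{C^{\mathrm{(S)}}}{\log_2M}=\lim_{M\to\infty}\frac{c_0}{\log_2M}=0,
\end{equation}
since $\log_2 M\to\infty$.

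Unlike Proposition \ref{pro1}, there is no genuine asymptotic calculation here; the only substantive point — and thus the ``main obstacle'' in spirit — is the structural observation that, with the IRS near S, the half-duplex DF bottleneck falls entirely on the un-aided R-D hop, so that increasing $M$ boosts only the already-stronger S-R link and leaves the end-to-end capacity saturated. Once the $\min$ in \eqref{C_S} is resolved in favor of the R-D term, the limit is a one-line consequence.
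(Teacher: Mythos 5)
Your proposal is correct and follows essentially the same route as the paper: the text preceding \eqref{C_S} establishes that the absence of an R-I-D reflection link forces $C^{\mathrm{(S)}}=\frac{1}{2}r^{\rm (S)}_{\mathrm{R,D}}$, which by \eqref{Rrd_S} is a constant independent of $M$, so dividing by $\log_2 M\to\infty$ immediately gives the limit $0$. Your additional care in noting that the $\min$ resolves to the R-D term only for properly designed beamforming (and for large $M$) is a slightly more explicit rendering of the same argument the paper treats as immediate.
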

Comparing Propositions \ref{pro1} and \ref{pro2} yields the following result.
\begin{corollary}\label{pro3}
\emph{For single-IRS deployment, we have}
\begin{align}\nonumber
    \lim_{M\to\infty}\frac{C^{\rm (R)}}{\log_2M}=1>\lim_{M\to\infty}\frac{C^{\mathrm{(S)}}}{\log_2M}=\lim_{M\to\infty}\frac{C^{\mathrm{(D)}}}{\log_2M}=0,
\end{align}
\emph{where $C^{\mathrm{(D)}}$ is the capacity for single-IRS deployment near D, which can be shown equal to $C^{\mathrm{(S)}}$.}
\end{corollary}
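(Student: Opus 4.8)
The plan is to assemble the corollary from the two propositions already proven plus a single new ingredient, namely the identity $C^{\mathrm{(D)}}=C^{\mathrm{(S)}}$. The first equality $\lim_{M\to\infty} C^{\rm (R)}/\log_2 M = 1$ is exactly Proposition~\ref{pro1}, and $\lim_{M\to\infty} C^{\mathrm{(S)}}/\log_2 M = 0$ is exactly Proposition~\ref{pro2}. Granting $C^{\mathrm{(D)}}=C^{\mathrm{(S)}}$, the corresponding limit for $C^{\mathrm{(D)}}$ is immediate, and the strict inequality $1>0$ closes the chain. Thus the entire content of the corollary reduces to justifying the symmetry claim, which is the only step not already in hand.

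To establish $C^{\mathrm{(D)}}=C^{\mathrm{(S)}}$, I would invoke the geometric symmetry of the node layout under the reflection isometry $T:(x,y,z)\mapsto(2L-x,y,z)$, which fixes the plane $x=L$. Under $T$ we have $\boldsymbol{u}_{\rm S}\leftrightarrow\boldsymbol{u}_{\rm D}$ while $\boldsymbol{u}_{\rm R}$ is fixed, and the IRS placement near S (central point $\boldsymbol{u}_{\rm I_{\rm S}}=(0,0,H_2)$ with downtilt $\theta$) is carried exactly to the IRS placement near D (central point $\boldsymbol{u}_{\rm I_{\rm D}}=(2L,0,H_2)$ with the same downtilt). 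Because $T$ is an isometry it preserves all pairwise distances $D_{i,j}$ and maps every AoA/AoD pair of the ``near-S'' configuration to the corresponding AoA/AoD pair of the ``near-D'' configuration, so the channel model \eqref{g_ij} and the resulting effective channels are taken into one another. In particular the S-I-R reflection link of the near-S case maps to the R-I-D reflection link of the near-D case, and the IRS-free hop maps to the IRS-free hop.

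The key step is then to track how this isometry acts on the capacity expression \eqref{C_S}. For single-IRS deployment near S, the R-D hop carries no reflection path, so $r^{\rm (S)}_{\mathrm{R,D}}$ is fixed at $\log_2(1+P\beta_0/(\sigma^2 L^{\alpha}))$ and forms the bottleneck, giving $C^{\mathrm{(S)}}=\frac{1}{2}r^{\rm (S)}_{\mathrm{R,D}}$. Applying $T$, the near-D case has its S-R hop free of any reflection path, so the bottleneck becomes $r^{\rm (D)}_{\mathrm{S,R}}=\log_2(1+P\beta_0/(\sigma^2 L^{\alpha}))$, yielding $C^{\mathrm{(D)}}=\frac{1}{2}r^{\rm (D)}_{\mathrm{S,R}}$. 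Since $D_{\mathrm{S,R}}=D_{\mathrm{R,D}}=L$ forces $r^{\rm (D)}_{\mathrm{S,R}}=r^{\rm (S)}_{\mathrm{R,D}}$, I conclude $C^{\mathrm{(D)}}=C^{\mathrm{(S)}}$, and the corollary follows.

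I expect the main (albeit modest) obstacle to be phrasing the symmetry argument cleanly: one must verify not only that $T$ preserves distances and angles, but that the decode-and-forward bottleneck structure in \eqref{C_S} is itself invariant under swapping the two hops. This invariance relies on S and R having equal transmit power $P$ and equal noise power $\sigma^2$, so that the two half-slots play interchangeable roles under $T$; were the two transmit powers unequal, the identity $C^{\mathrm{(D)}}=C^{\mathrm{(S)}}$ would require the accompanying rebalancing of the relay location noted in the footnote on relay placement.
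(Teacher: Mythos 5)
Your proposal is correct and follows essentially the same route as the paper: the corollary is obtained by directly combining Propositions~\ref{pro1} and~\ref{pro2}, with the identity $C^{\mathrm{(D)}}=C^{\mathrm{(S)}}$ handled by the S--D symmetry of the setup (equal hop lengths $D_{\mathrm{S,R}}=D_{\mathrm{R,D}}=L$, equal transmit powers, and mirrored IRS placement), which the paper invokes without elaboration. Your explicit reflection-isometry argument and the observation that the IRS-free hop is the bottleneck in both cases simply spell out what the paper leaves implicit, so there is no substantive difference in approach.
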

This proposition is intuitively expected as the IRS deployed near S or D can be used to help data transmission for one of the S-R and R-D links only, while deploying the IRS near R can help both the S-R and R-D links.

\section{Multi-IRS Deployment}
\vspace{-3pt}
In this section, we characterize the capacity scaling order for the case of multi-IRS deployment where the $M$ reflecting elements are distributed over three separate IRSs near S, R, and D, denoted by I$_1$, I$_2$, and I$_3$, respectively (see Fig.~\ref{sysmod}(b)). 
In this case, the data transmission from S to D can be assisted by two single-reflection links (S-I$_1$-R and S-I$_2$-R links) and one double-reflection link (S-I$_1$-I$_2$-R link). 
Let $M_k$ denote the number of reflecting elements at I$_k$. Due to the system symmetry, we set $M_1=M_3=\rho M$ and $M_2=(1-2\rho)M$ with $0<\rho<0.5$.\footnote{$\rho=0$ corresponds to the single-IRS deployment near R, while $\rho=0.5$ corresponds to the case where all reflecting elements are equally deployed near S and D, while the resulting two IRSs operate independently with no cooperation.}
Moreover, we assume the LoS channel model for all the available links.
For data transmission from S to R, the effective channel under the multi-IRS deployment, denoted by $h^{\rm (M)}_{\mathrm{S,R}}$, is given by
\begin{equation}\label{hsrc}
\begin{aligned}
    &h^{\rm (M)}_{\mathrm{S,R}}=g_{\mathrm{S,R}}+\underbrace{\boldsymbol{g}^H_{\mathrm{I_{2}, R}}\boldsymbol{\Phi}_{\mathrm{I_2,S,R}}\boldsymbol{G}_{\mathrm{I_{1}, I_{2}}}\boldsymbol{\Phi}_{\mathrm{I_1,S,R}}\boldsymbol{g}_{\mathrm{S, I_{1}}}}_{h_{\mathrm{dr}}}\\
    &\qquad\quad+\underbrace{\boldsymbol{g}^H_{\mathrm{I_{1}, R}}\boldsymbol{\Phi}_{\mathrm{I_1,S,R}}\boldsymbol{g}_{\mathrm{S, I_{1}}}}_{h_{\mathrm{sr,I}_1}}+\underbrace{\boldsymbol{g}^H_{\mathrm{I_{2}, R}}\boldsymbol{\Phi}_{\mathrm{I_2,S,R}}\boldsymbol{g}_{\mathrm{S, I_{2}}}}_{h_{\mathrm{sr,I}_2}},
\end{aligned}
\end{equation}
where the channels $\boldsymbol{G}_{\mathrm{I_{1}, I_{2}}}\in\mathbb{C}^{M_2\times M_1}$, $\boldsymbol{g}^H_{\mathrm{I}_{k}, \mathrm{R}}$ $\in$ $\mathbb{C}^{1\times M_k}$,  $\boldsymbol{g}_{\mathrm{S, I}_{k}}$ $\in$ \!$\mathbb{C}^{M_k\times 1}$ are modeled based on \eqref{g_ij}, $\boldsymbol{\Phi}_{\mathrm{I}_k,\mathrm{S,R}}$ denotes the diagonal reflection matrix of I$_k$ when S transmits data to R, $h_{\mathrm{dr}}$ denotes the double-reflection channel, and $h_{\mathrm{sr,I}_k}$ denotes the single-reflection channel over I$_k$. 
Thus, the corresponding maximum achievable rate from S to R, denoted by $r^{\rm (M)}_{\mathrm{S,R}}$, is given by
\vspace{-3pt}
\begin{equation}\label{R_sr_C}
    r^{\rm (M)}_{\mathrm{S,R}} = \log_2\left(1+\frac{P|h^{\rm (M)}_{\mathrm{S,R}}|^2}{\sigma^2}\right).
\end{equation}
Similarly, the maximum achievable rate from R to D, denoted by $r^{\rm (M)}_{\mathrm{R,D}}$, can be obtained in a form similar as \eqref{R_sr_C}.
Based on the above, the capacity of the relay system under the multi-IRS deployment, denoted by $C^{\mathrm{(M)}}$, is given by
\begin{align}\label{C_c}
    \!\!\!\!C^{\mathrm{(M)}}\!=\!\frac{1}{2}\min{\left(\!\max_{\substack{\boldsymbol{\Phi}_{\mathrm I_1,{\mathrm{S,R}}}\\ \boldsymbol{\Phi}_{\mathrm I_2,{\mathrm{S,R}}}}}\!\!
    r^{\rm (M)}_{\mathrm{S,R}},\max_{\substack{\boldsymbol{\Phi}_{\mathrm I_2,{\mathrm{R,D}}}\\ \boldsymbol{\Phi}_{\mathrm I_3,{\mathrm{R,D}}}}}\!\!r^{\rm (M)}_{\mathrm{R,D}}\!\right)}\!\overset{(b)}{=}\!\!\max_{\substack{\boldsymbol{\Phi}_{\mathrm I_1,{\mathrm{S,R}}}\\ \boldsymbol{\Phi}_{\mathrm I_2,{\mathrm{S,R}}}}}\frac{1}{2}
    r^{\rm (M)}_{\mathrm{S,R}},
\end{align}
where $\boldsymbol{\Phi}_{\mathrm{I}_k,\mathrm{R,D}}$ represents the diagonal reflection matrix of I$_k$ when R transmits data to D, $\boldsymbol{\Phi}_{\mathrm{I}_k,\mathrm{S},\mathrm{R}}$, $\boldsymbol{\Phi}_{\mathrm{I}_k,\mathrm{R},\mathrm{D}}$ $\in$ $\boldsymbol{\mathcal{P}}_{M_k}$ with $k$ $=$ $1,2,3$, 
and $(b)$ is because the S$\to$R and R$\to$D links are symmetric and thus $r^{\rm (M)}_{\mathrm{S,R}}$ $=$ $r^{\rm (M)}_{\mathrm{R,D}}$ by properly designing the cooperative IRS passive beamforming (shown next).

Note that the optimal cooperative IRS passive beamforming for maximizing $r^{\rm (M)}_{\mathrm{S,R}}$ in \eqref{R_sr_C} is difficult to obtain due to the non-convex unit-modulus constraint as well as the coupling between $\boldsymbol{\Phi}_{\mathrm I_1,{\mathrm{S,R}}}$ and $\boldsymbol{\Phi}_{\mathrm I_2,{\mathrm{S,R}}}$ in the effective channel gain (see \eqref{hsrc}). Although a suboptimal solution can be obtained by designing an iterative algorithm using e.g., the semidefinite relaxation (SDR) and block coordinate descent (BCD) techniques \cite{9362274,9241706}, it yields little insight into the achievable rate under this IRS deployment strategy. To address this issue, we focus on characterizing the capacity scaling order for the multi-IRS deployment in the sequel.
We first derive the lower and upper bounds on the capacity given in \eqref{C_c}, and then obtain its scaling order when $M\to\infty$. 

\vspace{-5pt}
\subsection{Capacity Lower Bound}
First, we derive the lower bound on $C^{\mathrm{(M)}}$. 
For the effective channel in \eqref{hsrc}, it is known that when $M\to\infty$, the double-reflection channel dominates the others and the direct S-R channel is negligible \cite{9060923}.\footnote{Note that in \cite{9060923}, only the double-reflection link is considered.} 
Thus, to obtain the lower bound on $C^{\mathrm{(M)}}$, we propose to design the cooperative IRS passive beamforming to maximize the double-reflection channel power gain only, and at the same time, make the two single-reflection channels phase-aligned with the double-reflection channel. 
Note that it will be shown by simulations that this cooperative IRS passive beamforming design achieves favorable rate performance even when $M$ is small, and moreover does not compromise the capacity scaling order.
To this end, we first decompose the inter-IRS channel between I$_1$ and I$_2$, $\boldsymbol{G}_{\rm I_{1},\rm I_{2}}$, as follows.
\begin{equation}
    \begin{aligned}
        \boldsymbol{G}_{\rm I_{1},\rm I_{2}}\!\!=\!\!\underbrace{\sqrt{{g}_{\rm I_{1},\rm I_{2}}}\boldsymbol{a}_{\rm r}\!\!\left(\theta^{\rm r}_{\rm I_{1},\rm I_{2}},\!\vartheta^{\rm r}_{\rm I_{1},\rm I_{2}},\!M_2\right)}_{\boldsymbol{t}_{1}}\underbrace{\sqrt{{g}_{\rm I_{1},\rm I_{2}}}\boldsymbol{a}_{\rm t}^H\!\!\left(\theta^{\rm t}_{\rm I_{1},\rm I_{2}},\!\vartheta^{\rm t}_{\rm I_{1},\rm I_{2}},\!M_1\right)}_{\boldsymbol{t}^H_{2}}.
    \end{aligned}
\end{equation}
Then the optimization problem for designing the cooperative IRS passive beamforming can be formulated as
\begin{align}
    &\max_{\substack{\boldsymbol{\Phi}_{\mathrm I_1,{\mathrm{S,R}}}, \boldsymbol{\Phi}_{\mathrm I_2,{\mathrm{S,R}}}}}
    \quad~~|\boldsymbol{g}^H_{\mathrm{I_{2}, R}}\boldsymbol{\Phi}_{\mathrm{I_2,S,R}}\boldsymbol{G}_{\mathrm{I_{1}, I_{2}}}\boldsymbol{\Phi}_{\mathrm{I_1,S,R}}\boldsymbol{g}_{\mathrm{S, I_{1}}}|^2\\
    &\quad~~~~\text{s.t.} \qquad~~~~~~\boldsymbol{\Phi}_{\mathrm{I}_k,\mathrm{S},\mathrm{R}}\in\boldsymbol{\mathcal{P}}_{M_k},k=1,2,\nonumber\\
    &\quad~\qquad\qquad~~~~~~~\arg\!\left(h_{\mathrm{dr}}\right)\!=\!\arg\!\left(h_{\mathrm{sr,I}_k}\right),k=1,2.\nonumber
\end{align}
It can be shown that the optimal solution is given by
\begin{align}
    &[\boldsymbol{\hat\Phi}_{1}]_{m_1,m_1}\!\triangleq\!e^{\jmath\left({\arg\left({\boldsymbol{t}_1^H\boldsymbol{g}_{\mathrm{S,I_2}}}\right)}-\arg\left({[\boldsymbol{t}^H_{2}]_{m_1}}\right)-\arg\left({[\boldsymbol{g}_{\mathrm{S, I_1}}]_{m_1}}\right)\right)},\label{rc1}\\
    &[\boldsymbol{\hat\Phi}_{2}]_{m_2,m_2}\!\triangleq\!e^{\jmath\left(\arg\left({\boldsymbol{t}_2^H\boldsymbol{g}_{\mathrm{I_1,R}}}\right)-\arg\left({[\boldsymbol{g}^H_{\mathrm{I_2, R}}]_{m_2}}\right)-\arg\left({[\boldsymbol{t}_{1}]_{m_2}}\right)\right)},\label{rc2}
\end{align}
where $m_k=1,\cdots,M_k,k=1,2.$

Based on the above, we obtain the lower bound on $C^{\mathrm{(M)}}$.
\begin{lemma}\label{lem1}
\emph{For multi-IRS deployment, the system capacity given in \eqref{C_c} is lower-bounded by}
\begin{align}\label{C_c_low}
    C^{\mathrm{(M)}}\geq C^{\mathrm{(M)}}_{\rm low}\triangleq \frac{1}{2}\log_2\left(1+\frac{P}{\sigma^2} |h_{\mathrm{S,R,low}}^{\mathrm{(M)}}|^2\right),
\end{align}
\emph{where $h_{\mathrm{S,R,low}}^{\mathrm{(M)}}\triangleq\max\left(\frac{M_1 M_2 \beta^{\frac{3}{2}}_{0}}{(H_1H_2D_{\mathrm{I_1,I_2}})^{\frac{\alpha}{2}}}\!-\!\frac{\beta_0^{\frac{1}{2}}}{L^{\frac{\alpha}{2}}},0\right)$.}
\end{lemma}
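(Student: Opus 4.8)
The plan is to exploit that $C^{\mathrm{(M)}}$ in \eqref{C_c} is defined through a maximization over the feasible reflection matrices, so that \emph{any} admissible choice of $\boldsymbol{\Phi}_{\mathrm{I}_1,\mathrm{S,R}},\boldsymbol{\Phi}_{\mathrm{I}_2,\mathrm{S,R}}$ yields a valid lower bound on the achievable rate, and hence on $C^{\mathrm{(M)}}$. I would therefore substitute the explicit cooperative beamforming $\boldsymbol{\hat\Phi}_1,\boldsymbol{\hat\Phi}_2$ from \eqref{rc1}--\eqref{rc2} into the effective channel \eqref{hsrc} and evaluate the resulting $|h^{\mathrm{(M)}}_{\mathrm{S,R}}|$. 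Since $\log_2(1+\tfrac{P}{\sigma^2}|\cdot|^2)$ is monotonically increasing in the channel magnitude, it then suffices to produce a lower bound on $|h^{\mathrm{(M)}}_{\mathrm{S,R}}|$ of the stated form.

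The first key step is to evaluate the double-reflection term $h_{\mathrm{dr}}$ under this design. Using the rank-one factorization $\boldsymbol{G}_{\mathrm{I}_1,\mathrm{I}_2}=\boldsymbol{t}_1\boldsymbol{t}_2^H$, the term factors as $h_{\mathrm{dr}}=(\boldsymbol{g}^H_{\mathrm{I}_2,\mathrm{R}}\boldsymbol{\hat\Phi}_2\boldsymbol{t}_1)(\boldsymbol{t}_2^H\boldsymbol{\hat\Phi}_1\boldsymbol{g}_{\mathrm{S,I}_1})$, i.e.\ a product of two independent coherent sums, one over the $M_1$ elements of $\mathrm{I}_1$ and one over the $M_2$ elements of $\mathrm{I}_2$. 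The phase choices in \eqref{rc1}--\eqref{rc2} are exactly those that cancel the per-element phases in each factor, so each factor reduces to the sum of its entrywise magnitudes. Because every entry of the array-response vectors $\boldsymbol{a}_{\mathrm{r}},\boldsymbol{a}_{\mathrm{t}}$ is unit modulus and the relevant link distances are $D_{\mathrm{S,I}_1}=H_2$, $D_{\mathrm{I}_2,\mathrm{R}}=H_1$ and $D_{\mathrm{I}_1,\mathrm{I}_2}$, plugging the channel gains from \eqref{g_ij} gives $|h_{\mathrm{dr}}|=\frac{M_1M_2\beta_0^{3/2}}{(H_1H_2D_{\mathrm{I}_1,\mathrm{I}_2})^{\alpha/2}}$, which is precisely the leading (double-reflection) term of $h^{\mathrm{(M)}}_{\mathrm{S,R,low}}$.

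It remains to argue that the remaining contributions to \eqref{hsrc} do not destroy this gain. By the alignment constraint built into \eqref{rc1}--\eqref{rc2}, the two single-reflection terms $h_{\mathrm{sr,I}_1},h_{\mathrm{sr,I}_2}$ are co-phased with $h_{\mathrm{dr}}$, so $|h_{\mathrm{dr}}+h_{\mathrm{sr,I}_1}+h_{\mathrm{sr,I}_2}|=|h_{\mathrm{dr}}|+|h_{\mathrm{sr,I}_1}|+|h_{\mathrm{sr,I}_2}|\ge|h_{\mathrm{dr}}|$, and these nonnegative terms can be safely discarded for a lower bound. Finally, the direct channel $g_{\mathrm{S,R}}$ (with $D_{\mathrm{S,R}}=L$) may carry an arbitrary phase, so I would invoke the reverse triangle inequality to account for its worst-case destructive alignment, giving $|h^{\mathrm{(M)}}_{\mathrm{S,R}}|\ge|h_{\mathrm{dr}}|-|g_{\mathrm{S,R}}|=\frac{M_1M_2\beta_0^{3/2}}{(H_1H_2D_{\mathrm{I}_1,\mathrm{I}_2})^{\alpha/2}}-\frac{\beta_0^{1/2}}{L^{\alpha/2}}$. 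Taking the maximum with $0$ (a magnitude being nonnegative) yields $h^{\mathrm{(M)}}_{\mathrm{S,R,low}}$, and the monotonicity of the rate in $|h^{\mathrm{(M)}}_{\mathrm{S,R}}|$ completes the argument.

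I expect the main obstacle to be the bookkeeping in the co-phasing, not the magnitude computation. The delicate point is that, after the $M_1+M_2$ per-element phases are fixed to coherently combine each factor of $h_{\mathrm{dr}}$, only two residual global phases (one per IRS) remain free; one must check that these two degrees of freedom indeed suffice to simultaneously co-phase both $h_{\mathrm{sr,I}_1}$ and $h_{\mathrm{sr,I}_2}$ with $h_{\mathrm{dr}}$, as the alignment constraint in \eqref{rc1}--\eqref{rc2} asserts. Once that is confirmed, the worst-case treatment of $g_{\mathrm{S,R}}$ via the reverse triangle inequality and the final $\max(\cdot,0)$ are routine.
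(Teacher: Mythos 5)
Your proposal is correct and takes essentially the same route as the paper's own proof: substitute the cooperative design \eqref{rc1}--\eqref{rc2} into \eqref{hsrc}, use the phase alignment of the double- and single-reflection terms so their magnitudes add (allowing the nonnegative single-reflection terms to be dropped), apply the reverse triangle inequality to the direct link $g_{\mathrm{S,R}}$, and evaluate $|h_{\mathrm{dr}}|$ by coherent combining at both IRSs. The delicate point you flag resolves affirmatively: writing $\arg(h_{\mathrm{dr}})=\theta_{\mathrm{dr}}+\phi_1+\phi_2$ and $\arg(h_{\mathrm{sr,I}_k})=\theta_k+\phi_k$, where $\phi_1,\phi_2$ are the residual global phases of the two IRSs, the choice $\phi_1=\theta_2-\theta_{\mathrm{dr}}$, $\phi_2=\theta_1-\theta_{\mathrm{dr}}$ simultaneously aligns all three reflection terms without affecting $|h_{\mathrm{dr}}|$.
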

\begin{proof}
By substituting the passive beamforming of I$_1$ and I$_2$ given in \eqref{rc1} and \eqref{rc2} into \eqref{hsrc}, we have
\hspace{-7pt}\begin{align}\label{ineq:hsr_low}
\allowdisplaybreaks[4]
\hspace{-7pt}&|h_{\mathrm{S,R}}^{\mathrm{(M)}}|\!\triangleq\!|g_{\mathrm{S,R}}\!\!\!+\!\boldsymbol{g}^H_{\mathrm{I_{2},R}}\!\boldsymbol{\hat\Phi}_{2}\boldsymbol{G}_{\mathrm{I_{1}\!,I_{2}}}\!\boldsymbol{\hat\Phi}_{1}\boldsymbol{g}_{\mathrm{S, I_{1}}}\!\!+\!\boldsymbol{g}^H_{\mathrm{I_{1}\!, R}}\!\boldsymbol{\hat\Phi}_{1}\boldsymbol{g}_{\mathrm{S, I_{1}}}\!\!\!+\!\boldsymbol{g}^H_{\mathrm{I_{2}\!,R}}\!\boldsymbol{\hat\Phi}_{2}\boldsymbol{g}_{\mathrm{S,I_{2}}}|\nonumber\\ \allowdisplaybreaks[4]
\hspace{-7pt}&\overset{(c)}{\geq} |\boldsymbol{g}^H_{\mathrm{I_{2},R}}\boldsymbol{\hat\Phi}_{2}\boldsymbol{G}_{\mathrm{I_{1}, I_{2}}}\!\boldsymbol{\hat\Phi}_{1}\boldsymbol{g}_{\mathrm{S,I_{1}}}\!\!+\!\boldsymbol{g}^H_{\mathrm{I_{1},R}}\boldsymbol{\hat\Phi}_{1}\boldsymbol{g}_{\mathrm{S,I_{1}}}\!\!+\!\boldsymbol{g}^H_{\mathrm{I_{2},R}}\boldsymbol{\hat\Phi}_{2}\boldsymbol{g}_{\mathrm{S,I_{2}}}|\!-\!|g_{\mathrm{S,R}}|\nonumber\\
\hspace{-7pt}&\overset{(d)}{=}\!|\boldsymbol{g}^H_{\mathrm{I_{2},R}}\!\boldsymbol{\hat\Phi}_{2}\boldsymbol{G}_{\mathrm{I_{1}\!,I_{2}}}\!\boldsymbol{\hat\Phi}_{1}\boldsymbol{g}_{\mathrm{S, I_{1}}}\!|\!+\!|\boldsymbol{g}^H_{\mathrm{I_{1},R}}\!\boldsymbol{\hat\Phi}_{1}\boldsymbol{g}_{\mathrm{S,I_{1}}}\!|\!+\!|\boldsymbol{g}^H_{\mathrm{I_{2},R}}\!\boldsymbol{\hat\Phi}_{2}\boldsymbol{g}_{\mathrm{S, I_{2}}}\!|\!-\!|g_{\mathrm{S,R}}|\nonumber\\
\hspace{-7pt}&\geq\!\! |\boldsymbol{g}^H_{\mathrm{I_{2},R}}\boldsymbol{\hat\Phi}_{2}\boldsymbol{G}_{\mathrm{I_{1}, I_{2}}}\!\boldsymbol{\hat\Phi}_{1}\boldsymbol{g}_{\mathrm{S,I_{1}}}|\!\!-\!\!|g_{\mathrm{S,R}}|\!=\!\!\frac{M_1 M_2 \beta^{\frac{3}{2}}_{0}}{(H_1\!H_2D_{\mathrm{I_1\!,I_2}})^{\frac{\alpha}{2}}}\!\!-\!\!\frac{\beta_0^{\frac{1}{2}}}{L^{\frac{\alpha}{2}}},
\end{align}
where $(c)$ is due to the triangle inequality, and $(d)$ is because the two single- and one double-reflection links are phase-aligned. 
Combining \eqref{C_c} and \eqref{ineq:hsr_low} yields the desired result in \eqref{C_c_low}.
\end{proof}

\vspace{-10pt}
\subsection{Capacity Upper Bound}
Next, we derive the upper bound on $C^{\mathrm{(M)}}$. To this end, we first define the \textit{favorable channel conditions} for the designed cooperative IRS passive beamforming given in \eqref{rc1} and \eqref{rc2}.
\vspace{3pt}

\noindent\underline{\textbf{\textit{Definition}}} \textbf{1.} (Favorable channel conditions) For multi-IRS deployment, with the cooperative IRS passive beamforming given in \eqref{rc1} and \eqref{rc2}, the channels, $h_{\mathrm{dr}}$, $h_{\mathrm{sr,I}_1}$ and $h_{\mathrm{sr,I}_2}$, are said to satisfy the favorable channel conditions when
\begin{align}
    &e^{\jmath\arg{\left([\diag{\left(\boldsymbol{g}^H_{\mathrm{S,I}_{k}}\right)}\boldsymbol{g}_{\mathrm{I}_{k},\mathrm{R}}]_{m_k}\right)}}=[\boldsymbol{\hat\Phi}_{k}]_{m_k,m_k},\label{fcc_1}\\
    &\arg(h_{\mathrm{dr}})=\arg(h_{\mathrm{sr,I}_1})=\arg(h_{\mathrm{sr,I}_2})=\arg(g_{\mathrm{S,R}}).\label{fcc_2}
\end{align}
When the favorable channel conditions in \eqref{fcc_1} and \eqref{fcc_2} hold, it can be shown that the proposed cooperative IRS passive beamforming in \eqref{rc1} and \eqref{rc2} simultaneously maximizes the channel power gains of both the two single-reflection and one double-reflection links, and at the same time, makes the above three reflection links phase-aligned with the direct link.
It is worth noting that the condition in \eqref{fcc_1} can be approximately satisfied when the IRS altitude is sufficiently small.
Next, we derive the upper bound on $C^{\mathrm{(M)}}$ under the favorable channel conditions.
\begin{lemma}\label{lem2}
\emph{For multi-IRS deployment, the system capacity given in \eqref{C_c} is upper-bounded by}
\begin{align}\label{C_c_up}
    C^{\mathrm{(M)}}\leq C^{\mathrm{(M)}}_{\rm upp}\triangleq \frac{1}{2}\log_2\left(1+\frac{P}{\sigma^2} |h^{\mathrm{(M)}}_{\mathrm{S,R,upp}}|^2\right),
\end{align}
\begin{equation}\nonumber
    \noindent\text{\emph{where}}~|h^{\mathrm{(M)}}_{\mathrm{S,R,upp}}|=\frac{\beta_0^{\frac{1}{2}}}{L^{\frac{\alpha}{2}}}+\frac{M_1 M_2 \beta^{\frac{3}{2}}_{0}}{(H_1\!H_2D_{\mathrm{I_1,I_2}})^{\frac{\alpha}{2}}}+\sum_{k=1}^2\frac{M_k\beta_{0}}{H_k^{\frac{\alpha}{2}}\!(L^2\!+\!H_k^2)^{\frac{\alpha}{4}}}.
\end{equation}
\end{lemma}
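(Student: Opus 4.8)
The plan is to bound the magnitude $|h^{\mathrm{(M)}}_{\mathrm{S,R}}|$ term by term and then invoke the monotonicity of $\log_2(1+\cdot)$. Since $C^{\mathrm{(M)}}$ in \eqref{C_c} is the maximum of $\frac{1}{2}r^{\rm (M)}_{\mathrm{S,R}}$ over the feasible reflection matrices, and $r^{\rm (M)}_{\mathrm{S,R}}$ is increasing in $|h^{\mathrm{(M)}}_{\mathrm{S,R}}|^2$, it suffices to show $|h^{\mathrm{(M)}}_{\mathrm{S,R}}|\le|h^{\mathrm{(M)}}_{\mathrm{S,R,upp}}|$ for every $\boldsymbol{\Phi}_{\mathrm{I}_1,\mathrm{S,R}}\in\boldsymbol{\mathcal{P}}_{M_1}$ and $\boldsymbol{\Phi}_{\mathrm{I}_2,\mathrm{S,R}}\in\boldsymbol{\mathcal{P}}_{M_2}$. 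Applying the triangle inequality to the four-term decomposition in \eqref{hsrc} gives $|h^{\mathrm{(M)}}_{\mathrm{S,R}}|\le|g_{\mathrm{S,R}}|+|h_{\mathrm{dr}}|+|h_{\mathrm{sr,I}_1}|+|h_{\mathrm{sr,I}_2}|$, so the task reduces to maximizing each of the four magnitudes separately over the unit-modulus phase shifts and summing the maxima.

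First I would dispatch the direct and single-reflection terms. The term $|g_{\mathrm{S,R}}|$ is fixed at $\beta_0^{1/2}/L^{\alpha/2}$ by the LoS model \eqref{g_ij} with $D_{\mathrm{S,R}}=L$. Each single-reflection term $h_{\mathrm{sr,I}_k}=\sum_{m}[\boldsymbol{g}^H_{\mathrm{I}_k,\mathrm{R}}]_m[\boldsymbol{\Phi}_{\mathrm{I}_k,\mathrm{S,R}}]_{m,m}[\boldsymbol{g}_{\mathrm{S,I}_k}]_m$ obeys $|h_{\mathrm{sr,I}_k}|\le\sum_m|[\boldsymbol{g}_{\mathrm{I}_k,\mathrm{R}}]_m|\,|[\boldsymbol{g}_{\mathrm{S,I}_k}]_m|$ by the triangle inequality, with equality under per-element phase alignment. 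Because the LoS array responses have unit-modulus entries, every summand is identical, so this collapses to $M_k\beta_0/(D_{\mathrm{S,I}_k}D_{\mathrm{I}_k,\mathrm{R}})^{\alpha/2}$; inserting the geometric distances read off from the coordinates of S, R, I$_1$ and I$_2$ yields the two single-reflection contributions in $|h^{\mathrm{(M)}}_{\mathrm{S,R,upp}}|$.

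The main obstacle is the double-reflection term $h_{\mathrm{dr}}$, which couples $\boldsymbol{\Phi}_{\mathrm{I}_1,\mathrm{S,R}}$ and $\boldsymbol{\Phi}_{\mathrm{I}_2,\mathrm{S,R}}$ and so cannot be treated by a single phase-alignment step. The resolution is the rank-one LoS structure $\boldsymbol{G}_{\mathrm{I}_1,\mathrm{I}_2}=\boldsymbol{t}_1\boldsymbol{t}_2^H$ introduced above, which factorizes $h_{\mathrm{dr}}=(\boldsymbol{g}^H_{\mathrm{I}_2,\mathrm{R}}\boldsymbol{\Phi}_{\mathrm{I}_2,\mathrm{S,R}}\boldsymbol{t}_1)(\boldsymbol{t}_2^H\boldsymbol{\Phi}_{\mathrm{I}_1,\mathrm{S,R}}\boldsymbol{g}_{\mathrm{S,I}_1})$ into a product of two inner products, each governed by a single reflection matrix. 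Maximizing the two factors independently by phase alignment, again exploiting the unit-modulus entries and $|[\boldsymbol{t}_1]_m|=|[\boldsymbol{t}_2]_m|=|g_{\mathrm{I}_1,\mathrm{I}_2}|^{1/2}$, bounds $|h_{\mathrm{dr}}|$ by $M_1M_2\,\beta_0/(D_{\mathrm{S,I}_1}D_{\mathrm{I}_2,\mathrm{R}})^{\alpha/2}\cdot|g_{\mathrm{I}_1,\mathrm{I}_2}|$, which simplifies to $M_1M_2\beta_0^{3/2}/(H_1H_2D_{\mathrm{I}_1,\mathrm{I}_2})^{\alpha/2}$ after using $D_{\mathrm{I}_2,\mathrm{R}}=H_1$, $D_{\mathrm{S,I}_1}=H_2$ and $|g_{\mathrm{I}_1,\mathrm{I}_2}|=\beta_0^{1/2}/D_{\mathrm{I}_1,\mathrm{I}_2}^{\alpha/2}$. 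Summing the four maxima reproduces $|h^{\mathrm{(M)}}_{\mathrm{S,R,upp}}|$ exactly, and monotonicity of the logarithm then delivers \eqref{C_c_up}.

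Finally I would comment on why the statement is phrased under the favorable channel conditions of Definition 1. The term-wise triangle inequality is generically loose, since the phase alignments that separately maximize $|h_{\mathrm{dr}}|$, $|h_{\mathrm{sr,I}_1}|$ and $|h_{\mathrm{sr,I}_2}|$ need not be realizable by a common pair $(\boldsymbol{\Phi}_{\mathrm{I}_1,\mathrm{S,R}},\boldsymbol{\Phi}_{\mathrm{I}_2,\mathrm{S,R}})$. Conditions \eqref{fcc_1} and \eqref{fcc_2} are exactly what force the proposed design \eqref{rc1} and \eqref{rc2} to attain all four maxima simultaneously and to phase-align the three reflection links with $g_{\mathrm{S,R}}$; under them the bound is achieved, so Lemmas \ref{lem1} and \ref{lem2} sandwich $C^{\mathrm{(M)}}$ tightly enough for the ensuing scaling-order argument. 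I expect the only delicate bookkeeping to be the correct pairing of the element counts $(M_1,M_2)$ with the heights $(H_1,H_2)$ in the single-reflection distances.
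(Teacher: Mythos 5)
Your proposal is correct and follows essentially the same route as the paper: the triangle inequality on the four-term decomposition of $h^{\mathrm{(M)}}_{\mathrm{S,R}}$ followed by term-wise maximization of each reflection link (with the rank-one factorization $\boldsymbol{G}_{\mathrm{I}_1,\mathrm{I}_2}=\boldsymbol{t}_1\boldsymbol{t}_2^H$ decoupling the two phase matrices in the double-reflection term), which the paper compresses into its step $(f)$ and justifies by invoking the cooperative design \eqref{rc1}--\eqref{rc2} under the favorable channel conditions; your observation that those conditions are needed only for tightness of the bound, not for its validity, is also accurate. Your closing ``bookkeeping'' concern is well founded: since I$_1$ sits at altitude $H_2$ and I$_2$ at altitude $H_1$, the geometrically correct single-reflection terms pair $M_1$ with $H_2$ and $M_2$ with $H_1$, so the pairing $M_k$--$H_k$ as printed in the lemma appears to be a typo in the paper, one that is immaterial to the ensuing capacity scaling argument.
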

\begin{proof}
For the effective channel gain given in \eqref{hsrc}, we have
\vspace{-3pt}
\!\!\begin{align}\label{ineq:hsr_up}
\allowdisplaybreaks[4]
\!\!\!\!\!&\!\!\!\!\!\!|h^{\mathrm{(M)}}_{\mathrm{S,R}}|\!=\!|g_{\mathrm{S,R}}\!\!+\!h_{\mathrm{dr}}\!+\!h_{\mathrm{sr,I}_1}\!\!\!+\!h_{\mathrm{sr,I}_2}|\!\!\overset{(e)}{\leq}\!\!|g_{\mathrm{S,R}}|\!\!+\!\!|h_{\mathrm{dr}}|\!\!+\!\!|h_{\mathrm{sr,I}_1}\!|\!\!+\!\!|h_{\mathrm{sr,I}_2}|\!\nonumber\\
&\!\!\!\!\!\!\overset{(f)}{=}\!\!|h^{\mathrm{(M)}}_{\mathrm{S,R,upp}}|\!\!\triangleq\!\!\frac{\beta_0^{\frac{1}{2}}}{L^{\frac{\alpha}{2}}}\!\!+\!\!\frac{M_1 M_2 \beta^{\frac{3}{2}}_{0}}{(H_1\!H_2D_{\mathrm{I_1,I_2}})^{\frac{\alpha}{2}}}\!\!+\!\!\sum_{k=1}^2\frac{M_k\beta_{0}}{H_k^{\frac{\alpha}{2}}\!(L^2\!+\!H_k^2)^{\frac{\alpha}{4}}},
\end{align}
where $(e)$ is due to the triangle inequality, and $(f)$ is obtained by individually optimizing the passive beamforming for each of the single-reflection and double-reflection channels, which can be achieved by using the cooperative IRS passive beamforming given in \eqref{rc1} and \eqref{rc2} when the favorable channel conditions in \eqref{fcc_1} and \eqref{fcc_2} hold. 
Combining \eqref{C_c} and \eqref{ineq:hsr_up} yields the desired result.
\end{proof}

\vspace{-10pt}
\subsection{Capacity Scaling Order}
Last, based on Lemmas \ref{lem1} and \ref{lem2}, we obtain the capacity scaling order under the multi-IRS deployment strategy as follows.
\begin{proposition}\label{pro4}
\emph{For multi-IRS deployment, the system capacity increases with $M$ as $M\to\infty$ according to}
\vspace{-3pt}
\begin{align}\label{C_Ca}
    \lim_{M\to\infty}\frac{C^{\mathrm{(M)}}}{\log_2M}=2.
\end{align} 
\end{proposition}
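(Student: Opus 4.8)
The plan is to sandwich $C^{\mathrm{(M)}}$ between the two bounds established in Lemmas~\ref{lem1} and~\ref{lem2} and invoke the squeeze theorem. Since $C^{\mathrm{(M)}}_{\rm low}\le C^{\mathrm{(M)}}\le C^{\mathrm{(M)}}_{\rm upp}$, it suffices to prove that $\lim_{M\to\infty} C^{\mathrm{(M)}}_{\rm low}/\log_2 M = \lim_{M\to\infty} C^{\mathrm{(M)}}_{\rm upp}/\log_2 M = 2$, from which the claim in \eqref{C_Ca} follows immediately.

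First I would substitute the element allocation $M_1=M_3=\rho M$ and $M_2=(1-2\rho)M$ into both bounds, so that the double-reflection term becomes $M_1 M_2 = \rho(1-2\rho)M^2$. In the lower bound, the argument of the $\max$, namely $\frac{M_1 M_2 \beta_0^{3/2}}{(H_1 H_2 D_{\mathrm{I_1,I_2}})^{\alpha/2}}-\frac{\beta_0^{1/2}}{L^{\alpha/2}}$, grows quadratically in $M$ and hence becomes positive for all sufficiently large $M$; thus for large $M$ the $\max$ selects the nonzero branch, and $|h_{\mathrm{S,R,low}}^{\mathrm{(M)}}|^2$ scales as $c^2 M^4$ with $c=\frac{\rho(1-2\rho)\beta_0^{3/2}}{(H_1 H_2 D_{\mathrm{I_1,I_2}})^{\alpha/2}}$. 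Taking logarithms, $C^{\mathrm{(M)}}_{\rm low}=\frac{1}{2}\log_2(1+\frac{P}{\sigma^2}|h_{\mathrm{S,R,low}}^{\mathrm{(M)}}|^2)$ behaves like $\frac{1}{2}\log_2(M^4)+O(1)=2\log_2 M+O(1)$, so dividing by $\log_2 M$ and letting $M\to\infty$ yields the limit $2$, exactly as in the normalization carried out in the proof of Proposition~\ref{pro1} (the constant and lower-order additive terms vanish after division).

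The upper bound is handled identically: in $|h_{\mathrm{S,R,upp}}^{\mathrm{(M)}}|$ the double-reflection contribution $\frac{M_1 M_2 \beta_0^{3/2}}{(H_1 H_2 D_{\mathrm{I_1,I_2}})^{\alpha/2}}=c M^2$ again dominates the single-reflection terms $\sum_{k=1}^2 \frac{M_k\beta_0}{H_k^{\alpha/2}(L^2+H_k^2)^{\alpha/4}}=O(M)$ and the constant direct term $\beta_0^{1/2}/L^{\alpha/2}$. Hence $|h_{\mathrm{S,R,upp}}^{\mathrm{(M)}}|^2\sim c^2 M^4$ with the \emph{same} leading coefficient $c$, giving $C^{\mathrm{(M)}}_{\rm upp}=2\log_2 M+O(1)$ and therefore the same limit $2$ after normalization by $\log_2 M$. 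Since both bounds share the identical leading behavior, the squeeze theorem delivers $\lim_{M\to\infty} C^{\mathrm{(M)}}/\log_2 M = 2$.

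The proof is essentially routine; the only point requiring care is confirming that the quartic $M^4$ growth of the squared effective channel gain originates, in both bounds, from the single $M^2$ double-reflection term, so that the linear single-reflection and constant direct-link contributions are asymptotically negligible and the two bounds yield the identical scaling constant. This is precisely the analytical reflection of the physical fact emphasized in the text, namely that the double-reflection link dominates the effective channel as $M\to\infty$.
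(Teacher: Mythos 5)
Your proposal is correct and follows essentially the same route as the paper's own proof: sandwich $C^{\mathrm{(M)}}$ between the bounds of Lemmas~\ref{lem1} and~\ref{lem2}, substitute $M_1=\rho M$, $M_2=(1-2\rho)M$ so the double-reflection term $M_1M_2$ dominates with $M^2$ growth, and conclude that both bounds behave as $2\log_2 M + O(1)$. If anything, your formulation is slightly tidier than the paper's, since you apply the squeeze argument directly to the normalized ratios $C^{\mathrm{(M)}}_{\rm low}/\log_2 M$ and $C^{\mathrm{(M)}}_{\rm upp}/\log_2 M$ (and explicitly note that the $\max$ in the lower bound eventually selects its positive branch), rather than first comparing the unnormalized, divergent limits as the paper does.
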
\vspace{-3pt}
\begin{proof}
With $M\to\infty$ and based on Lemma \ref{lem1}, we have
\vspace{-3pt}\vspace{-3pt}\!\!\begin{align}\label{Ccl}
    &\lim_{M\to\infty}C^{\mathrm{(M)}}\!\geq\!\lim_{M\to\infty}C^{\mathrm{(M)}}_{\rm low}=\lim_{M\to\infty}\!\frac{1}{2} \log _{2}\!\left(\!1\!+\!\frac{P}{\sigma^{2}}\!|h_{\mathrm{S,R,low}}^{\mathrm{(M)}}|^2\right)\nonumber\\
    &\!=\!\!\lim_{M\to\infty}\!\frac{1}{2} \log _{2}\!\left(\!1\!+\!\frac{P}{\sigma^{2}}\!\left(\!\frac{M_{1} M_{2} \beta_{0}^{\frac{3}{2}}}{(H_1\!H_2D_{\mathrm{I_1,I_2}})^{\frac{\alpha}{2}}}\!-\!\frac{\beta_{0}^{\frac{1}{2}}}{L^{\frac{\alpha}{2}}}\!\right)^{\!\!2}\right)\\
    &\!=\!\!\lim_{M\to\infty}\!\!\log_2\!\!\left(\!\!\frac{\sqrt{P}\beta_0^{\frac{3}{2}}M_1M_2}{\sigma{(H_1\!H_2D_{\mathrm{I_1,I_2}})^{\frac{\alpha}{2}}}}\!\!\right)\!\overset{(g)}{=}\!\!\!\!\lim_{M\to\infty}\!\!\log_2\!\!\left(\!\!\frac{\sqrt{P}\beta_0^{\frac{3}{2}}\rho(1\!-\!2\rho)M^2}{\sigma{(H_1\!H_2D_{\mathrm{I_1,I_2}})^{\frac{\alpha}{2}}}}\!\!\right)\nonumber,
\end{align}
where $(g)$ is obtained by substituting $M_1=\rho M$ and $M_2=(1-2\rho) M$.
Similarly, based on Lemma \ref{lem2}, we have
\vspace{-3pt}
\begin{align}\label{Ccu}
    &\lim_{M\to\infty}\!C^{\mathrm{(M)}}\!\leq\!\lim_{M\to\infty}\!C^{\mathrm{(M)}}_{\rm upp}\!=\!\!\lim_{M\to\infty}\!\frac{1}{2} \log _{2}\!\left(\!1\!+\!\frac{P}{\sigma^{2}}\!|h_{\mathrm{S,R,upp}}^{\mathrm{(M)}}|^2\right)\nonumber\\
    &=\lim_{M\to\infty}\log_2\!\left(\frac{\sqrt{P}\beta_0^{\frac{3}{2}}\rho(1-2\rho)M^2}{\sigma{(H_1\!H_2D_{\mathrm{I_1,I_2}})^{\frac{\alpha}{2}}}}\right).
\end{align}\vspace{-3pt}
Combining \eqref{Ccl} and \eqref{Ccu} leads to
\begin{equation}
\lim_{M\to\infty}\!C^{\mathrm{(M)}}_{\rm low}\leq \lim_{M\to\infty}C^{\mathrm{(M)}}\leq \lim_{M\to\infty}\!C^{\mathrm{(M)}}_{\rm upp},
\end{equation}
where $\lim_{M\to\infty}\!C^{\mathrm{(M)}}_{\rm low}=\lim_{M\to\infty}\!C^{\mathrm{(M)}}_{\rm upp}$.
Thus, we have\vspace{-3pt}
\begin{align}\label{C_C_asy}
    &\lim_{M\to\infty}\frac{C^{\mathrm{(M)}}}{\log_2M}\!=\!\lim_{M\to\infty}\frac{2\log_2M}{\log_2M}\!+\!\frac{\log_2\left(\sqrt{P}\beta_0^{\frac{3}{2}}\rho(1\!-\!2\rho)\right)}{\log_2M}\nonumber\\
    &\!-\!\frac{\log_2\left(\sigma{(H_1\!H_2D_{\mathrm{I_1,I_2}})^{\frac{\alpha}{2}}}\right)}{\log_2M}
    =2,
\end{align}
which thus completes the proof.
\end{proof}

Proposition \ref{pro4} is expected since the capacity with an asymptotically large $M$ is dominated by the double-reflection link, which has a capacity scaling order of $2$.
Moreover, it is worth noting that given a fixed number of reflecting elements, $M$, the asymptotic capacity with multi-IRS deployment in \eqref{C_C_asy} is maximized when $\rho=\frac{1}{4}$ (i.e., $M_1=M_3=\frac{M}{4},M_2=\frac{M}{2}$), which can be proved by setting the first-order derivative of the term $\log_2\left(\frac{\sqrt{P}\beta_0^{\frac{3}{2}}\rho(1-2\rho)}{\sigma{(H_1\!H_2D_{\mathrm{I_1,I_2}})^{\frac{\alpha}{2}}}}\right)$ given in \eqref{C_C_asy} over $\rho$ equal to 0.
\end{spacing}

\vspace{-3pt}
\begin{spacing}{0.85}
\section{Numerical Results}
Numerical results are presented in this section. {\color{black}The distance from S to D is set to be 1000 m with the R deployed at the middle between S and D, i.e., $L=500$ m, and the IRS near S (D) is deployed with the downtilt angle of $\theta=\frac{\pi}{4}$. Moreover, for the IRSs near R and S (D), we set their altitudes as $H_1=5$ m and $H_2=4$ m, respectively.} The system operates at a carrier frequency of 6 GHz with the wavelength of $\lambda=0.05 \mathrm{~m}$. Other parameters are set as $\beta_0=-30$ dB, $\alpha=2$, $P=30$ dBm, $d_{\rm I}=\frac{\lambda}{4}$, and $\sigma^2=-90$ dBm.

\begin{figure}[t] \centering
{\subfigure[\color{black}{LoS channel model.}] {
\label{capa}
\includegraphics[width=0.46\linewidth]{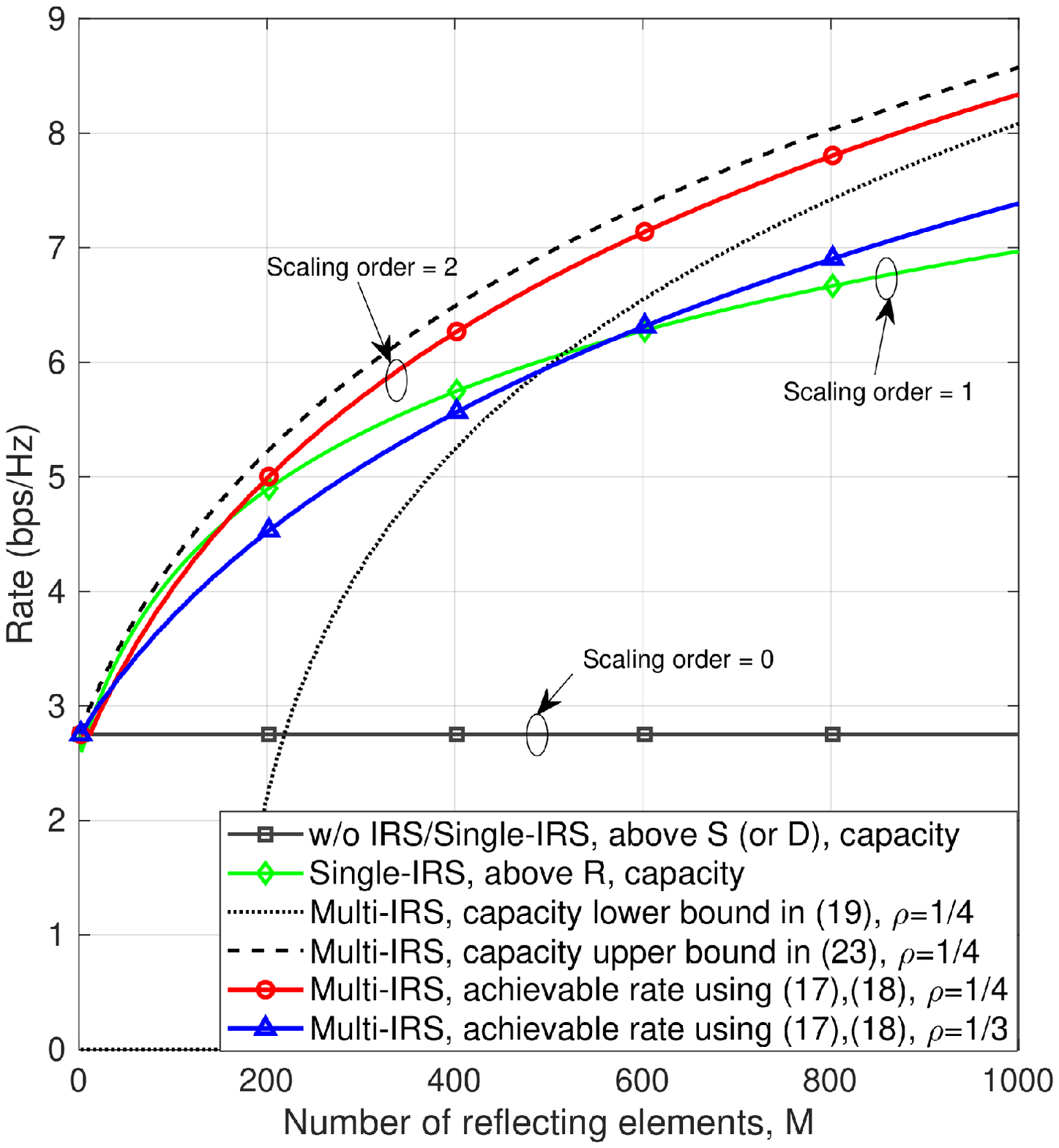}  
}}
{\subfigure[\color{black}{Rician fading channel model.}] {\label{rician}
\includegraphics[width=0.46\linewidth]{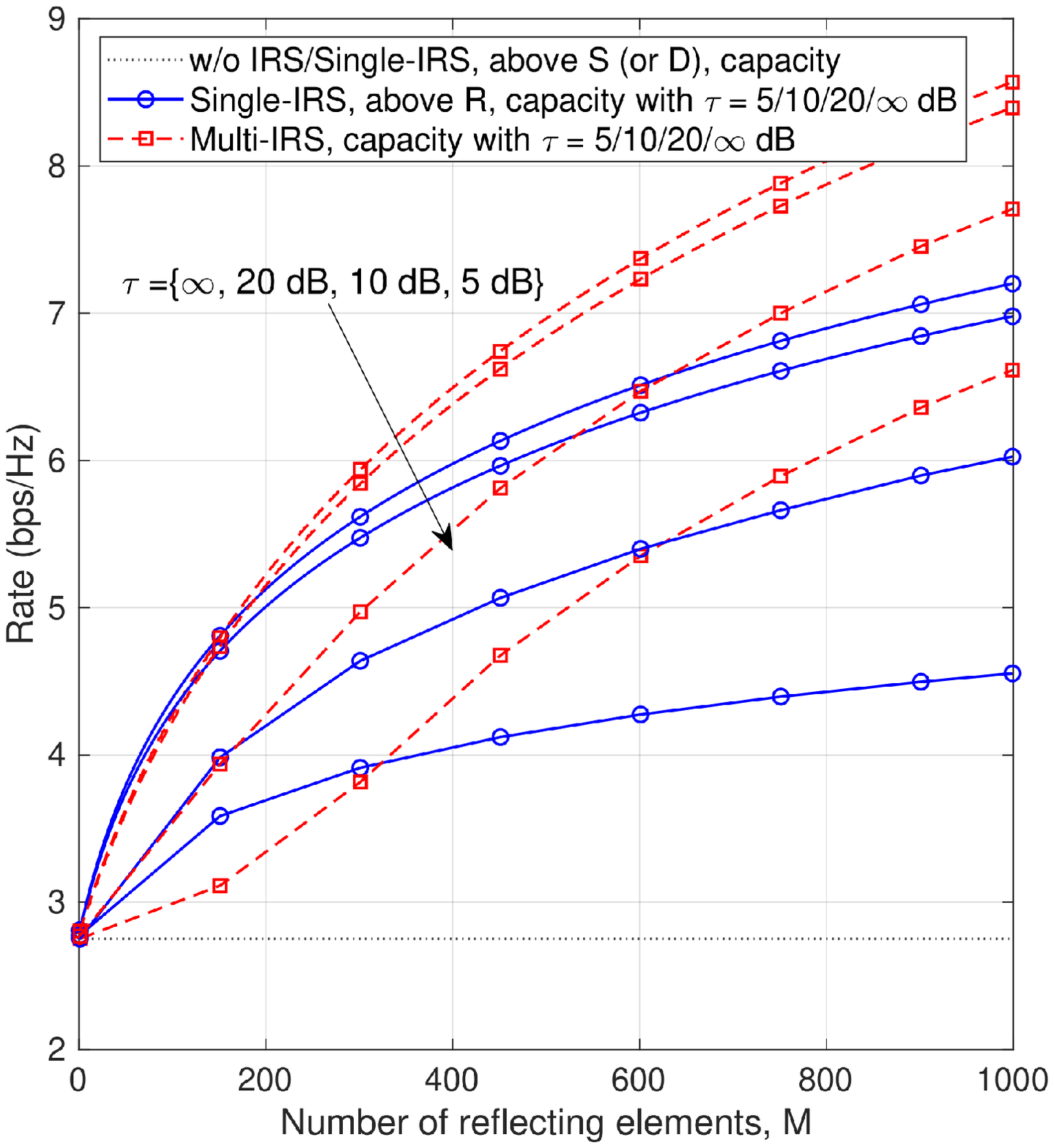}
}}
{\caption{{\color{black}Rate comparison between single- and multi-IRS deployment strategies.}}}
\end{figure}

Fig.~\ref{capa} compares the rates under the single- and multi-IRS deployment strategies. 
{\color{black}First, for single-IRS deployment, it is observed that deploying the IRS near R achieves a higher capacity than that near S (or D) or that without IRS (w/o IRS), which is consistent with Corollary \ref{pro3}.} 
Second, for multi-IRS deployment, it is observed that the capacity lower bound approaches its upper bound as $M$ increases.
Moreover, our proposed cooperative IRS passive beamforming design given in \eqref{rc1} and \eqref{rc2} is shown to achieve close rate performance to its capacity upper bound, especially when $M$ is large. 
Another observation is that the multi-IRS deployment strategy with $\rho=\frac{1}{4}$ outperforms that with equal elements allocation among the three IRSs, i.e., $\rho=\frac{1}{3}$.
{\color{black}Last, we observe that the proposed multi-IRS deployment strategy achieves a higher capacity than the single-IRS deployment near R or S/D when $M$ is sufficiently large (e.g., $M=150$), thanks to the higher asymptotic passive beamforming gain of the double-reflection links.}

{\color{black}In Fig. \ref{rician}, we evaluate the rate performance of different IRS deployment strategies under a more practical scenario, where the channel from node $i$ to $j$, with $i\in\{\rm{S,I~ (I_1),R}\}$, $j\in\{\rm{I~(I_2),R,D}\}$, $i\neq j$, and $\{i,j\}\neq\{\rm{S,D}\}$, follows the Rician fading channel model with its Rician factor denoted by $\tau$. Note that we optimize the IRS deployment strategies and their corresponding beamforming designs based on the (dominant) LoS component by ignoring the non-LoS (NLoS) components at first. It is observed that as the Rician factor $\tau$ increases, the achievable rates of both the single- and multi-IRS deployment strategies increase. Moreover, the case with a sufficiently large $\tau$ (e.g., $\tau=20$ dB) achieves close performance to that under the (ideal) LoS channel model (i.e., $\tau=\infty$). Other observations are similar to those in Fig. \ref{capa}.}

\end{spacing}
\begin{spacing}{0.9}
\vspace{-2pt}
\section{Conclusions}

In this letter, we studied the IRS deployment strategy for an IRS-aided wireless relaying system. 
Under the LoS channel model, we characterize the capacity scaling orders of the single- and multi-IRS deployment strategies.
It was shown that the multi-IRS deployment strategy achieves a higher capacity than the single-IRS counterparts when the total number of reflecting elements is large thanks to the higher passive beamforming gain of double-reflection links as compared to single-reflection links.
Moreover, numerical examples validated our analytical results and the effectiveness of our proposed cooperative IRS passive beamforming design.
\end{spacing}
\vspace{-2pt}
\bibliographystyle{IEEEtran}

\end{document}